\definecolor{webgreen}{rgb}{0,.5,0}
\definecolor{webbrown}{rgb}{.6,0,0}
\newcommand{\seqnum}[1]{\href{https://oeis.org/#1}{\rm \underline{#1}}}
\newcommand{\Enn}{\mathbb{N}}
\begin{document}

\theoremstyle{plain}
\newtheorem{theorem}{Theorem}
\newtheorem{corollary}[theorem]{Corollary}
\newtheorem{lemma}[theorem]{Lemma}
\newtheorem{proposition}[theorem]{Proposition}

\theoremstyle{definition}
\newtheorem{definition}[theorem]{Definition}
\newtheorem{example}[theorem]{Example}
\newtheorem{conjecture}[theorem]{Conjecture}

\theoremstyle{remark}
\newtheorem{remark}[theorem]{Remark}

\def\modd#1 #2{#1\ \mbox{\rm (mod}\ #2\mbox{\rm )}}

\title{Proving properties of some greedily-defined integer recurrences
via automata theory}

\author{Jeffrey Shallit\thanks{Research supported by NSERC, grant 2018-04118.}\\
School of Computer Science\\
University of Waterloo\\
Waterloo, ON  N2L 3G1 \\
Canada\\
\href{mailto:shallit@uwaterloo.ca}{\tt shallit@uwaterloo.ca}}

\maketitle

\begin{abstract}
Venkatachala on the one hand, and
Avdispahi\'c \& Zejnulahi on the other, both studied integer sequences
with an unusual sum property defined in a greedy way, and proved
many results about them.  However, their proofs were rather lengthy
and required numerous cases.  In this paper, I provide a different
approach, via finite automata, that can prove the same results (and more)
in a simple, unified way.  Instead of case analysis, we use
a decision procedure implemented in the free software
{\tt Walnut}.   Using these ideas, we can prove a conjecture of Quet
and find connections between Quet's sequence and the ``married'' functions
of Hofstadter.
\end{abstract}

\section{Introduction}

Let $\Enn = \{0,1,2,\ldots\}$ denote the natural numbers.
In 2009, B. J. Venkatachala 
\cite{Venkatachala:2009} studied the properties of
an amazing sequence of natural numbers
$(f(n))_{n \geq 0}$, whose first
few values are given in Table~\ref{tab1}.

The sequence $f(n)$ can be defined inductively, using a greedy
algorithm, as follows:
$f(0) = 0$, and for $n\geq 1$,
$f(n)$ is the least natural number such that
\begin{itemize}
\item[(a)]
$f(n) \not\in \{f(0), f(1), \ldots, f(n-1) \}$;
\item[(b)]
the sum $\sum_{1 \leq i \leq n} f(i)$ is divisible by $n$.
\end{itemize}
The problem of constructing such a sequence was
proposed earlier by Shapovalov \cite{Shapovalov:1996}.

The related sequence $h$ is defined by the equation 
$$ h(n) = {1 \over n} \bigl(f(1) + \cdots + f(n) \bigr) $$
for $n \geq 1$.  Table~\ref{tab1} gives the first few values.

\begin{table}[H]
\begin{center}
\begin{tabular}{c|ccccccccccccccccccccc}
$n$ & 0 & 1 & 2 & 3 & 4 & 5 & 6 & 7 & 8 & 9 & 10 & 11 & 12 & 13 & 14 & 15 & 16 & 17 & 18 & 19  \\
\hline
$f(n)$ & 0& 1& 3& 2& 6& 8& 4&11& 5&14&16& 7&19&21& 9&24&10&27&29&12\\
$h(n)$ & 0& 1& 2& 2& 3& 4& 4& 5& 5& 6& 7& 7& 8& 9& 9&10&10&11&12&12
\end{tabular}
\end{center}
\caption{Table of the first few values of the sequences $f$ and $h$.}
\label{tab1}
\end{table}
The sequence $f$ is sequence \seqnum{A019444} in the 
OEIS \cite{oeis}, and the sequence
$h$ is sequence \seqnum{A019446} in the OEIS.
Also note that $f(n) = \seqnum{A002251}(n-1) + 1$ for $n \geq 1$.

Venkatachala obtained many interesting results about these numbers,
but his proofs required long case analysis.
In this note we show how to obtain very simple proofs
of these results, and many others
in Venkatachala's paper, with {\tt Walnut}, a theorem-prover
for automatic sequences \cite{Mousavi:2016,Shallit:2022}.
We can also obtain some new results.
It simply suffices to state the theorems as first-order
logic assertions, and let {\tt Walnut} verify them.  All verifications
in this paper were done in a matter of seconds on a laptop.

Similarly, in 2020, Avdispahi\'c \& Zejnulahi
\cite{Avdispahic&Zejnulahi:2020} studied two sequences
$(z(n))_{n \geq 0}$ and $(m(n))_{n \geq 0}$ tabulated in 
Table~\ref{tab2}.

The sequence $z(n)$ can be defined inductively, using
a greedy algorithm, as follows:
$z(0) = 0$, and for $n\geq 1$,
$z(n)$ is the least natural number such that
\begin{itemize}
\item[(a)]
$z(n) \not\in \{z(0), z(1), \ldots, z(n-1) \}$;
\item[(b)]
the sum $\sum_{2 \leq i \leq n} z(i)$ is divisible by $n+1$.
\end{itemize}
The sequence $m$ is defined by the relation
$$ m(n) = {1 \over {n+1}} \bigl(z(2) + \cdots + z(n) \bigr) $$
for $n \geq 1$.  Table~\ref{tab2} gives the first few values.

\begin{table}[H]
\begin{center}
\begin{tabular}{c|ccccccccccccccccccccc}
$n$ & 0 & 1 & 2 & 3 & 4 & 5 & 6 & 7 & 8 & 9 & 10 & 11 & 12 & 13 & 14 & 15 & 16 & 17 & 18 & 19  \\ 
\hline
$z(n)$ & 0& 1& 3& 5& 2& 8&10& 4&13&15& 6&18& 7&21&23& 9&26&28&11&31\\
$m(n)$ & 0& 0& 1& 2& 2& 3& 4& 4& 5& 6& 6& 7& 7& 8& 9& 9&10&11&11&12
\end{tabular}
\end{center}
\caption{Table of the first few values of the sequences $z$ and $m$.}
\label{tab2} 
\end{table}
The sequence $z(n)$ is sequence \seqnum{A340510} in the OEIS,
and the sequence $m(n)$ is sequence \seqnum{A005379}.
This latter sequence also appears in
Hofstadter's celebrated book
\cite[p.~137]{Hofstadter:1979} and was analyzed by Stoll
\cite{Stoll:2008}.

Avdispahi\'c \& Zejnulahi obtained many results on these sequences.  Once again, we can
obtain the same results in a much simpler way using finite
automata.

\section{Automata for the Venkatachala sequences}
\label{sec2}

As Venkatachala observed, these sequences are related to the
golden ratio $\varphi = (1+\sqrt{5})/2$.  So we might suspect
they are related to the so-called Zeckendorf
numeration system \cite{Zeckendorf:1972}; in fact, that they
are computable by a finite automaton.

In this numeration system, natural numbers are represented
as binary strings; the string $a_1 a_2 \cdots a_t$
represents the number $\sum_{1 \leq i \leq t} a_i F_{t+2-i}$,
where $F_0 = 0$, $F_1 = 1$, and $F_n = F_{n-1} + F_{n-2}$ are
the Fibonacci numbers.
In general, numbers may have multiple representations, but
uniqueness is obtained if we insist that $a_i a_{i+1} = 0$
for all $i$.

We can use the ``guessing'' procedure discussed in 
\cite{Shallit:2022} to find candidate automata
computing the functions $f$ and $h$,
based on actual computed initial values. They are
depicted in Figures~\ref{fig1} and \ref{fig2}.  We call
these automata {\tt fp} and {\tt hp}, respectively,
and the functions they compute $f'$ and $h'$.
\begin{figure}[htb]
\begin{center}
\includegraphics[width=6in]{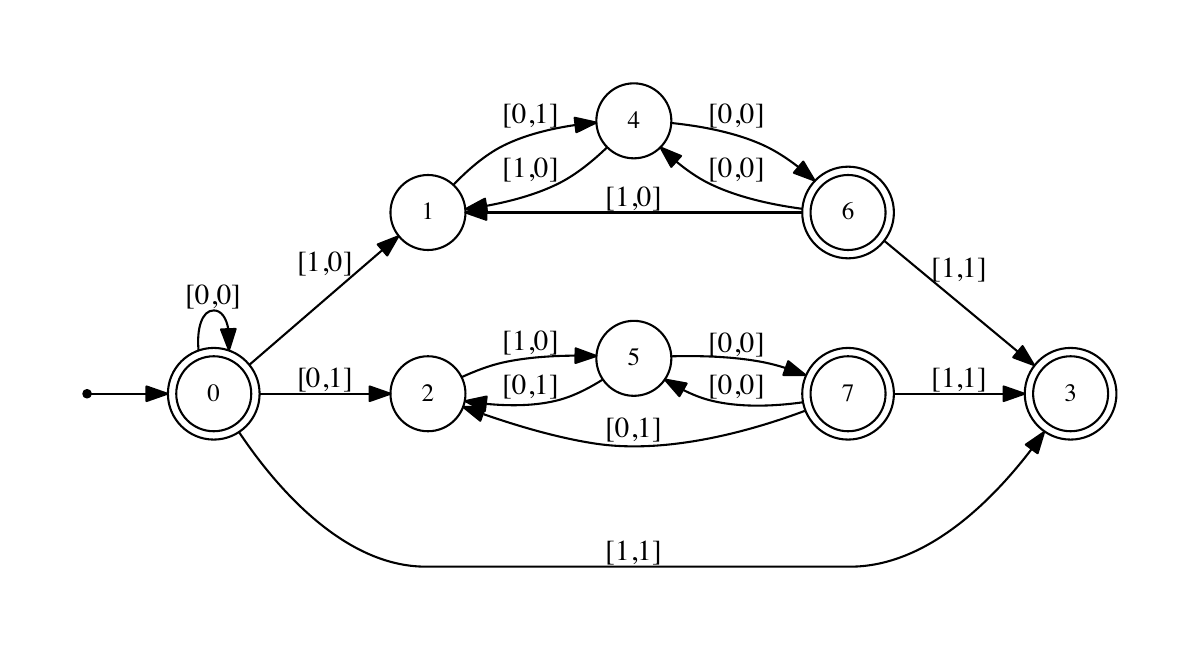}
\end{center}
\caption{Fibonacci automaton for $f'(n)$.}
\label{fig1}
\end{figure}

\begin{figure}[htb]
\begin{center}
\includegraphics[width=6in]{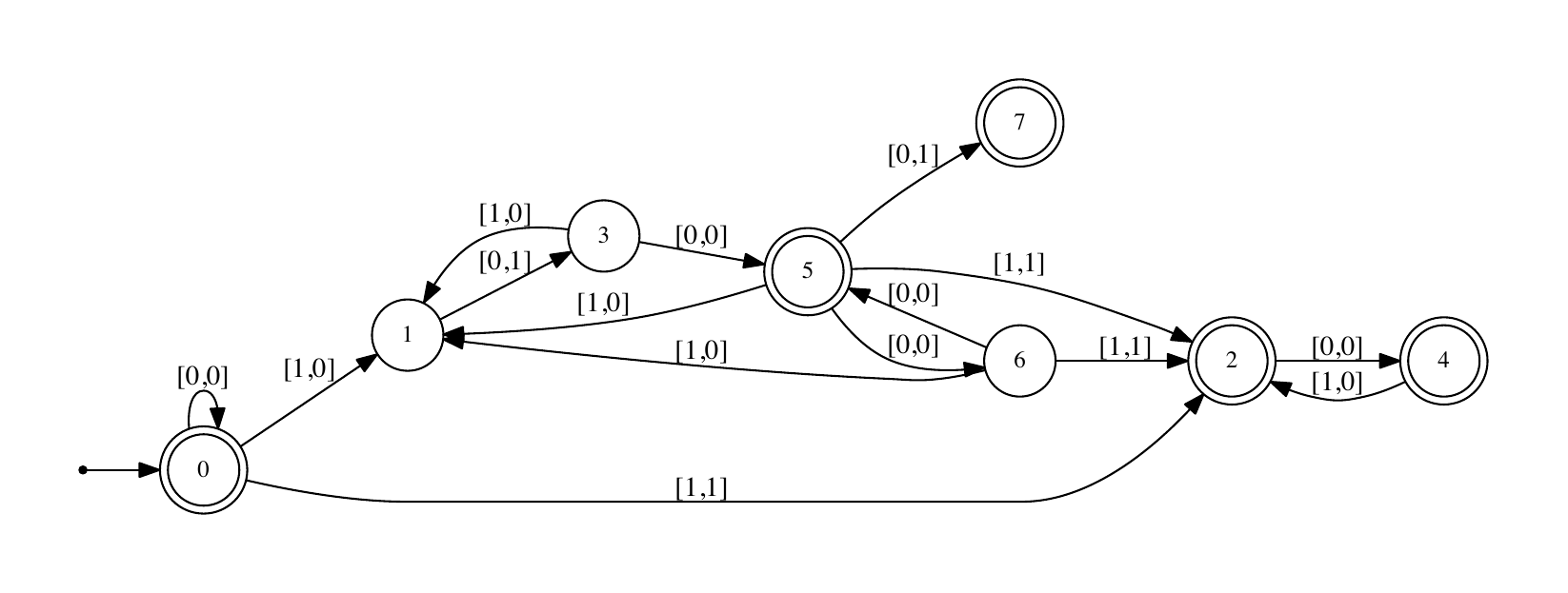}
\end{center}
\caption{Fibonacci automaton for $h'(n)$.}
\label{fig2}
\end{figure}

These automata operate as follows.  The automaton {\tt fp} 
(respectively, {\tt hp}) computes $f'(n)$ (respectively, $h'(n)$),
as follows:  one feeds the automaton with the Zeckendorf representation of
$n$ and $x$ in parallel, starting
with the most significant digits, and padding a shorter input with
leading zeros, if necessary.  Starting in state $0$, and following
the arrows, one arrives at an accepting state (depicted with a double
circle) if and only if $x = f'(n)$ (respectively, $x = h'(n)$).
Automata like these are called
``Fibonacci-synchronized'', and their properties are
discussed in \cite{Shallit:2021h}.

Technically speaking, these automata compute relations on $\Enn
\times \Enn$.
Our first step is to check that these {\it relations\/} are actually
{\it functions}.   We can do this with the following {\tt Walnut} code.
It checks that there is a value associated with every argument, and no
argument has two or more values associated with it.
\begin{verbatim}
eval func_f_check1 "?msd_fib An Ex $fp(n,x)":
eval func_f_check2 "?msd_fib An ~Ex,y x!=y & $fp(n,x) & $fp(n,y)":
eval func_h_check1 "?msd_fib An Ex $hp(n,x)":
eval func_h_check2 "?msd_fib An ~Ex,y x!=y & $hp(n,x) & $hp(n,y)":
\end{verbatim}
{\tt Walnut} returns {\tt TRUE} for each command.

It may be helpful to explain the syntax here.   {\tt eval} is
a command to evaluate the first-order formula in quotes whose
name is given immediately after the command.   The {\tt ?msd\_fib}
tells {\tt Walnut} that numbers are to be represented in
the Zeckendorf numeration system.  {\tt A} is the universal quantifier
$\forall$ and {\tt E} is the existential quantifier $\exists$.
The symbol {\tt \&} represents logical AND, and
{\tt \char'176} represents logical NOT.

Now we know that $f'$ and $h'$ are indeed natural-number valued
functions.  We now have to check that $f' = f$ and $h' = h$.
The first step is to check that
our guessed functions $f', h'$ satisfy the relation
\begin{equation}
(n+1) h'(n+1) - n h'(n) = f'(n+1)
\label{hf2}
\end{equation}
for all $n \geq 0$.  And to check
this, we will compute a linear representation for the difference
$(n+1) h'(n+1) - nh'(n) - f'(n+1)$ and verify that it 
represents the $0$ function.

Here, by a linear representation, we mean a triple
$(v, \mu, w)$, where $v$ is a $1 \times n$ vector,
$w$ is an $n \times 1$ vector, and $\mu$ is an $n\times n$ matrix-valued
morphism on the alphabet $\{0, 1\}$.  The number $n$ is called the
{\it rank\/} of the linear representation.  In this paper,
a linear representation
for a sequence $a(n)$ means that if $n$ is written as a binary
string $e_1 e_2 \cdots e_t$ in Zeckendorf representation, then
$a(n) = v \mu(e_1) \cdots \mu(e_t) w$.  For more information
about linear representations, see
\cite{Berstel&Reutenauer:2011}.

We start by using {\tt Walnut}
to construct the linear representations for $n$, $n+1$,
$h'(n)$, $h'(n+1)$, and $f'(n+1)$:
\begin{verbatim}
eval en n "?msd_fib i<n":
eval enp1 n "?msd_fib i<=n":
eval hn n "?msd_fib Ex $hp(n,x) & i<x":
eval hnp1 n "?msd_fib Ex $hp(n+1,x) & i<x":
eval fnp1 n "?msd_fib Ex $fp(n+1,x) & i<x":
\end{verbatim}
These representations have rank $6,6,8,9,$ and $19$, respectively.

From these representations we can, using
the algorithms in \cite{Berstel&Reutenauer:2011},
construct the linear representation
for $(n+1) f'(n+1) - nf'(n) - h'(n+1)$.  It has rank
121.   When we minimize this linear representation, we
get the $0$ representation.    Thus Eq.~\eqref{hf2} is proved.

Now, substituting successively $1$, $2$, $\ldots$, $n-1$ for $n$
in Eq.~\eqref{hf2}, and adding up the resulting equations,
gives
\begin{equation}
nh'(n) = f'(1) + \cdots + f'(n).
\label{hf3}
\end{equation}
In particular, since by its definition as an automaton,
the quantity $h'(n)$ is always an integer, this proves
that $\sum_{1 \leq i \leq n} f'(i)$ is divisible by $n$.
Furthermore, once we verify that $f = f'$, then Eq.~\eqref{hf3}
will show that $h = h'$.

Now it remains to verify that $f'(n) = f(n)$ for all $n$.
We do this by induction.  The base case is $n = 0$.  For
the induction step, we assume we have shown $f(i) = f'(i)$
for $0 \leq i < n$, and we want to prove it for $n$.

First, let us rule out the possibility that $f(n) > f'(n)$.
To do this, it
suffices to check that $f'(n) \not\in \{ f'(0), f'(1), \ldots, f'(n-1) \}$,
which by induction shows that
$f'(n) \not\in \{ f(0), f(1), \ldots, f(n-1) \}$.
\begin{verbatim}
eval check_fp_membership "?msd_fib An,x,i,y ($fp(n,x) & $fp(i,y) & i<n)
   => x!=y":
\end{verbatim}
and {\tt Walnut} returns {\tt TRUE}.

Next, let us rule out the possibility that $f(n) < f'(n)$.  
To do this, we first prove that $f'(n) < 2n$ for all $n \geq 1$:
\begin{verbatim}
eval check_fp_inequality1 "?msd_fib An,x (n>=2 & $fp(n,x)) => x<2*n":
\end{verbatim}
Thus the only two possibilities left are $f(n) = f'(n)$ or
$f(n) = f'(n) - n$.
To rule out the second one, it suffices to show that either
$$f'(n) - n \leq 0 \quad \text{ or } \quad f'(n) - n \in \{ f'(0), f'(1), \ldots, f'(n-1) \}.$$
\begin{verbatim}
eval check_fp_inequality2 "?msd_fib An,x $fp(n,x) => 
   (x<=n | Ei,y i<n & $fp(i,y) & x=n+y)":
\end{verbatim}
And {\tt Walnut} returns {\tt TRUE}.
Thus we have proved that $f'(n) = f(n)$, and hence
$h'(n) = h(n)$.  From now on, then, we can
replace $f'$ with $f$ and $h'$ with $h$.

\begin{remark}
In the last part of the proof, we have also shown an alternative
characterization of the sequence $(f(n))_{n\geq 1}$; namely, that it is the
lexicographically least sequence of distinct positive integers with the
property that all values of $f(n)-n$ are also distinct.
This was observed by Ivan Neretin in the comments to
sequence \seqnum{A019444}.
\end{remark}

\section{Results of Venkatachala}
\label{sec3}

Now that we have proved the correctness of the automata,
we can easily re-prove the results of Venkatachala.  All we have to do
is translate his claims into first-order logic.
\begin{theorem}
\leavevmode
\begin{itemize}
\item[(a)] $h$ is a nondecreasing function of $n$ .
\item[(b)] $h(n) \leq n$ for $n \geq 0$.
\item[(c)] $h(n+1) \in \{ h(n), h(n)+1 \}$ for $n \geq 0$.
\item[(d)] If $n \geq 0$ then $h(n+1)=h(n) \Leftrightarrow f(n+1) = h(n)$.
\item[(e)] If $n \geq 0$ then $h(n+1)=h(n)+1 \Leftrightarrow f(n+1) = h(n)+n+1$.
\item[(f)] If $n \geq 1$ then $h(h(n))+h(n+1) = n+2$.
\item[(g)] If $n \geq 0$ then $f(f(n)) = n$.
\item[(h)] If $n \geq 1$ then $h(h(n)+n) = n+1$.
\item[(i)] If $n \geq 1$ then $h(n) = \lfloor n \varphi \rfloor -n+1$.
\item[(j)] The function $h$ does not assume the same value at three
distinct integers.
\item[(k)] For $n\geq 0$ we have $h(n+2) > h(n)$.
\item[(l)] The function $h$ is onto.
\item[(m)] The function $f$ is onto.
\item[(n)] The function $f$ is one-to-one.
\item[(o)] For $n \geq 6$ we have $h(n) \leq n-2$.
\item[(p)] If $f(n+1)> h(n)$, then $f(j)>h(n)$ for all $j\geq n+1$.
\item[(q)] There are no integers $k \geq 2$ and $\ell$ such that
$f(k-1) = \ell$ and $f(k) = \ell+1$.
\end{itemize}
\end{theorem}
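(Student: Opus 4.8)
The plan is entirely uniform. Having established in Section~\ref{sec2} that $f$ and $h$ coincide with the Fibonacci-synchronized functions $f'$ and $h'$ computed by the automata \texttt{fp} and \texttt{hp}, every one of the claims (a)--(q) becomes a first-order assertion in the structure $(\Enn,+,<)$ augmented by these two synchronized relations, and such assertions are decidable by Walnut via the general decidability of the first-order theory of $\Enn$ under Zeckendorf representation with synchronized functions adjoined. So for each part I would simply write the matching formula, using $\texttt{fp}(n,x)$ to mean ``$x=f(n)$'' and $\texttt{hp}(n,x)$ to mean ``$x=h(n)$'', and let Walnut return \texttt{TRUE}. The order-and-value statements translate immediately: (a) is $\forall n\,\forall x\,\forall y\,\big((\texttt{hp}(n,x)\wedge\texttt{hp}(n+1,y))\Rightarrow x\le y\big)$; (b) is $\forall n\,\forall x\,(\texttt{hp}(n,x)\Rightarrow x\le n)$; (c) asserts $y=x\vee y=x+1$ under the same hypotheses as (a); and (k), (o) follow the same template with the stated bounds.

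The parts involving composition or biconditionals need only a few extra quantifiers to name intermediate values. For (g) I would write $\forall n\,\forall x\,(\texttt{fp}(n,x)\Rightarrow\texttt{fp}(x,n))$, which says $f(f(n))=n$. For (f), naming $x=h(n)$, $y=h(h(n))$, and $z=h(n+1)$ gives
\[
\forall n\,\forall x\,\forall y\,\forall z\ \big((n\ge1\wedge\texttt{hp}(n,x)\wedge\texttt{hp}(x,y)\wedge\texttt{hp}(n+1,z))\Rightarrow y+z=n+2\big),
\]
and (h) is handled the same way by evaluating $h$ at the point $h(n)+n$. The equivalences (d), (e), the implication (p), and the nonexistence statement (q) are direct Boolean combinations of the two relations, while the structural claims are standard: surjectivity of $h$ in (l) is $\forall v\,\exists n\,\texttt{hp}(n,v)$, injectivity of $f$ in (n) is $\forall m\,\forall n\,\forall x\,\big((\texttt{fp}(m,x)\wedge\texttt{fp}(n,x))\Rightarrow m=n\big)$, and (j) asserts that no value of $h$ is attained at three distinct arguments. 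Each of these is a closed formula of the decidable theory, so Walnut terminates on it.

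The one part that requires an ingredient beyond the two given automata, and the step I expect to be the main obstacle, is (i): $h(n)=\lfloor n\varphi\rfloor-n+1$. Here $\lfloor n\varphi\rfloor$ is not among our synchronized functions, so before phrasing the identity I must supply a synchronized relation $\texttt{lw}(n,v)$ meaning ``$v=\lfloor n\varphi\rfloor$.'' Fortunately the lower Wythoff function $n\mapsto\lfloor n\varphi\rfloor$ is a prototypical Fibonacci-synchronized sequence, so an automaton for it can be guessed from initial values and verified exactly as \texttt{fp} and \texttt{hp} were in Section~\ref{sec2}. With $\texttt{lw}$ in hand, (i) becomes $\forall n\,\forall x\,\forall v\,\big((n\ge1\wedge\texttt{hp}(n,x)\wedge\texttt{lw}(n,v))\Rightarrow x+n=v+1\big)$, which Walnut checks directly. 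I do not anticipate genuine difficulty in any single part; the only real work is confirming that the composed relations used in (f)--(h) and the auxiliary Wythoff automaton are correctly synchronized, after which the decision procedure discharges the entire theorem as a finite list of \texttt{TRUE} verdicts.
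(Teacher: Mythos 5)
Your proposal matches the paper's proof essentially exactly: every part is translated into a first-order formula over the synchronized relations for $f$ and $h$ and discharged by {\tt Walnut}, including the extra synchronized relation for $\lfloor n\varphi\rfloor$ needed in part (i). The only cosmetic difference is that the paper obtains the Wythoff automaton not by guess-and-verify but from a known regular expression (the {\tt shift}/{\tt phin} definitions cited from the {\tt Walnut} book), which serves the same role as your {\tt lw} relation.
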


\begin{proof}
We will need {\tt Walnut} code for $\lfloor n \varphi \rfloor$,
$\lfloor n/\varphi \rfloor$, and
$\lfloor n \varphi^2 \rfloor$.  These can be found
in \cite{Shallit:2022}.
\begin{verbatim}
reg shift {0,1} {0,1} "([0,0]|[0,1][1,1]*[1,0])*":
def phin "?msd_fib (s=0 & n=0) | Ex $shift(n-1,x) & s=x+1":
def phi2n "?msd_fib (s=0 & n=0) | Ex,y $shift(n-1,x) &
   $shift(x,y) & s=y+2":
def noverphi "?msd_fib Et $phin(n,t) & s+n=t":
\end{verbatim}

We use the following {\tt Walnut} code.
\begin{verbatim}
eval testa "?msd_fib An,x,y ($h(n,x) & $h(n+1,y)) => x<=y":
eval testb "?msd_fib An,x $h(n,x) => x<=n":
eval testc "?msd_fib An,x,y ($h(n,x) & $h(n+1,y)) => (y=x | y=x+1)":
eval testd "?msd_fib An,x,y,z ($h(n,x) & $h(n+1,y) & $f(n+1,z)) => 
   (y=x <=> z=x)":
eval teste "?msd_fib An,x,y,z ($h(n,x) & $h(n+1,y) & $f(n+1,z)) => 
   (y=x+1 <=> z=x+n+1)":
eval testf "?msd_fib An,x,y,z (n>=1 & $h(n,x) & $h(x,y) 
   & $h(n+1,z)) => y+z=n+2":
eval testg "?msd_fib An,x,y ($f(n,x) & $f(x,y)) => y=n":
eval testh "?msd_fib An,x,y (n>=1 & $h(n,x) & $h(x+n,y)) 
   => y=n+1":
eval testi "?msd_fib An,x,y (n>=1 & $h(n,x) & $phin(n,y)) 
   => x+n=y+1":
eval testj "?msd_fib ~En1,n2,n3,x (n1<n2) & (n2<n3) & 
   $h(n1,x) & $h(n2,x) & $h(n3,x)":
eval testk "?msd_fib An,x,y ($h(n,x) & $h(n+2,y)) => y>x":
eval testl "?msd_fib Ax En $h(n,x)":
eval testm "?msd_fib Ax En $f(n,x)":
eval testn "?msd_fib An1,n2 (Ex $f(n1,x) & $f(n2,x)) => n1=n2":
eval testo "?msd_fib An,x (n>=6 & $h(n,x)) => x+2<=n":
eval testp "?msd_fib An,x,y ($f(n+1,x) & $h(n,y) & x>y) => 
   (Aj,t (j>=n+1 & $f(j,t)) => t>y)":
eval testq "?msd_fib ~Ek,l k>=2 & $f(k-1,l) & $f(k,l+1)":
\end{verbatim}
And all of them return {\tt TRUE}.
\end{proof}

For a different approach, see the recent paper of Dekking
\cite{Dekking:2023}.

\section{Automata and the Avdispahi\'c \& Zejnulahi sequences}

Once again, we can easily guess candidate Fibonacci automata for these
sequences from their initial values.   Once guessed, we can verify
their correctness exactly as we did in Sections~\ref{sec2}.
We omit the details.

The automaton for $z$ has $18$ states and that for $m$ has $8$ states.
They are depicted in Figures~\ref{fig3} and \ref{fig4}.
\begin{figure}[htb]
\begin{center}
\includegraphics[width=6in]{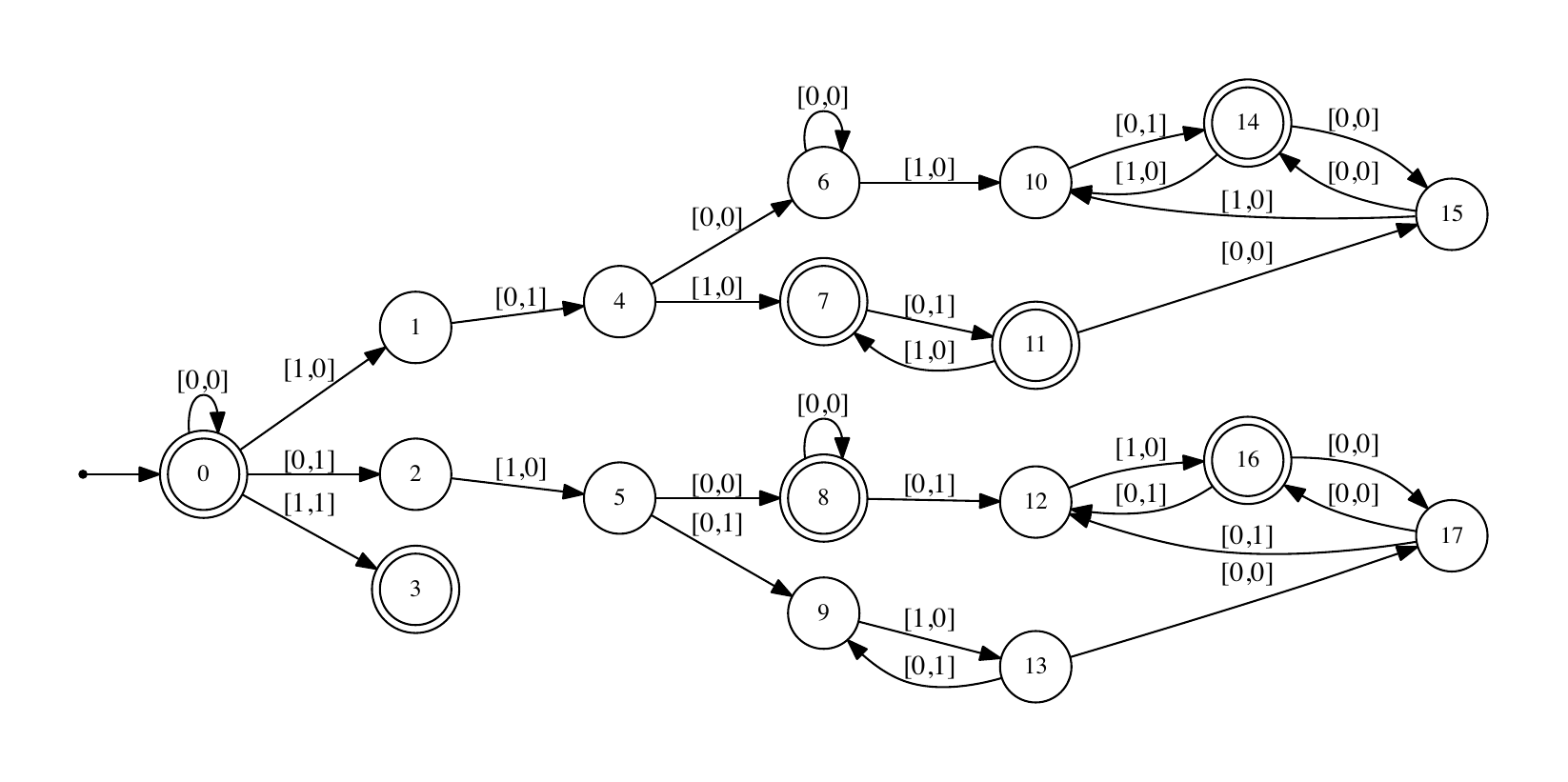}
\end{center}
\caption{Fibonacci automaton for $z(n)$.}
\label{fig3}
\end{figure}

\begin{figure}[htb]
\begin{center}
\includegraphics[width=6in]{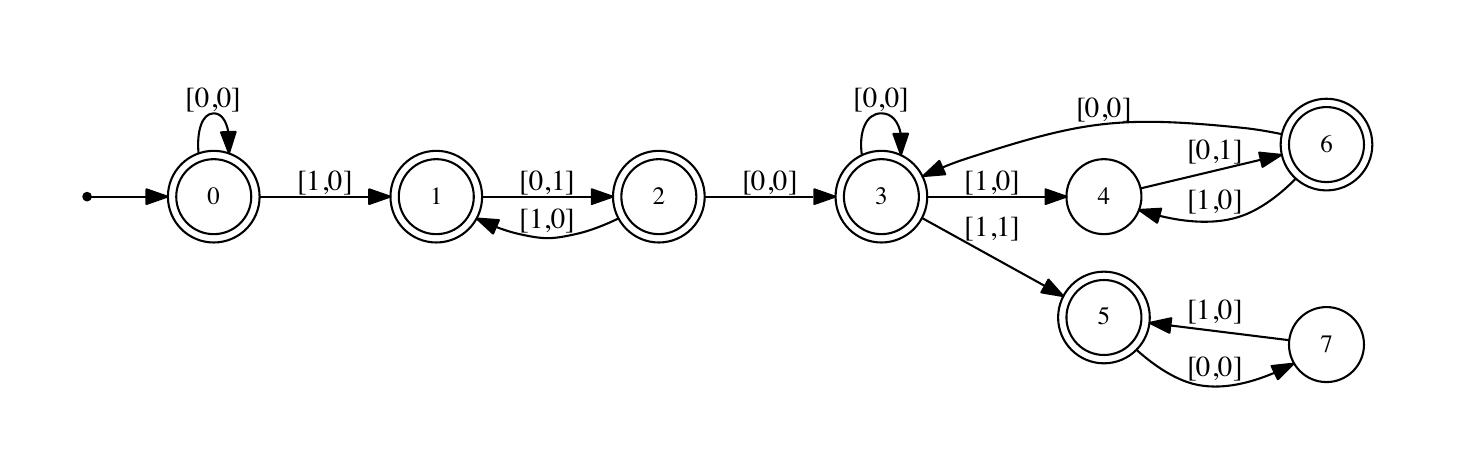}
\end{center}
\caption{Fibonacci automaton for $m(n)$.}
\label{fig4}
\end{figure}

Once we have the automata we can easily prove the following
results of Avdispahi\'c \& Zejnulahi:
\begin{theorem}
\leavevmode
\begin{itemize}
\item[(a)] If $n = F_k$ for $k \geq 2$ then $z(n) = F_{k+1}$.
\item[(b)] If $n = F_k - 1$ for $k \geq 4$ then $z(n) = F_{k-1} - 1$.
\item[(c)] 
If $n = \lfloor k\tau^2 \rfloor$ and
$n \not= F_i$ for $i>2$ and $n \not= F_i - 1$ for $i>4$, then
$z(n) = \lfloor k\tau \rfloor$.
\item[(d)]
If $n = \lfloor k\tau \rfloor$ and
$n \not= F_i$ for $i>2$ and $n \not= F_i - 1$ for $i>4$, then
$z(n) = \lfloor k\tau^2 \rfloor$.
\item[(e)] If $n>2$ and $m(n-1) \not= z(i)$ for $1 \leq i < n$,
then $z(n) = m(n)= m(n-1)$.
\item[(f)] If $n>2$ and $m(n-1) = z(i)$ for some $i$, $1 \leq i < n$,
then $z(n) = m(n-1)+n+1$ and $m(n)=m(n-1)+1$.
\item[(g)] The sequence $z(n)$ is onto.
\item[(h)] The sequence $z(n)$ is one-one.
\end{itemize}
\end{theorem}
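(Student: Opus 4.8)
The plan is to proceed exactly as in Section~\ref{sec2} and the proof of the theorem of Section~\ref{sec3}: having already guessed and verified the Fibonacci-synchronized automata \texttt{z} and \texttt{m} for the functions $z$ and $m$ (Figures~\ref{fig3} and~\ref{fig4}), it suffices to translate each of the eight assertions into a first-order sentence over the Zeckendorf numeration system and let \texttt{Walnut} decide it. For the golden-ratio quantities appearing in the statement I reuse the predicates \texttt{phin} and \texttt{phi2n} already defined earlier, which compute $\lfloor k\tau\rfloor$ and $\lfloor k\tau^2\rfloor$. In addition I introduce a one-line regular expression recognizing the Fibonacci numbers, namely those whose Zeckendorf representation is a single $1$ followed by zeros:
\begin{verbatim}
reg isfib msd_fib "0*10*":
\end{verbatim}
The leading zeros account for \texttt{Walnut}'s left-padding of shorter inputs, and since the pattern never contains \texttt{11} every matched string is automatically a valid representation.

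For parts (a) and (b) the subtlety is that the index $k$ is not itself a \texttt{Walnut} variable; only the \emph{values} $F_k$ and $F_{k\pm1}$ are. I sidestep this by expressing ``$y$ is the Fibonacci number immediately after $n$'' as $\texttt{isfib}(y)\wedge y>n\wedge \forall w\,((\texttt{isfib}(w)\wedge w>n)\Rightarrow w\ge y)$, and symmetrically for the immediately preceding Fibonacci number. Part (a) then asserts that for every Fibonacci $n$ above the degenerate initial range, $z(n)$ is the next Fibonacci number, while part (b) asserts that whenever $n+1$ is Fibonacci, $z(n)$ is one less than the Fibonacci number preceding $n+1$. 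An appropriate lower bound built into each hypothesis discards the small exceptional cases implicit in the conditions $k\ge2$ and $k\ge4$ (for instance the degenerate value $z(1)=1$).

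Parts (c)--(f) are direct transcriptions. For (c) I write the hypothesis ``$n=\lfloor k\tau^2\rfloor$'' as $\exists k\ \texttt{phi2n}(k,n)$, adjoin the exclusions $\neg\,\texttt{isfib}(n)$ and $\neg\,\texttt{isfib}(n+1)$, and require that $z(n)=\lfloor k\tau\rfloor$ for the \emph{same} witness $k$; part (d) merely swaps the roles of \texttt{phin} and \texttt{phi2n}. Parts (e) and (f) turn on the membership test ``$m(n-1)\in\{z(1),\dots,z(n-1)\}$,'' which I encode as $\exists i\,(1\le i<n\wedge z(i)=m(n-1))$ (and its negation), after which the stated values of $z(n)$ and $m(n)$ are read off directly through the automata \texttt{z} and \texttt{m}. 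Finally, parts (g) and (h) --- surjectivity and injectivity of $z$ --- are exact analogues of parts (m) and (n) of the previous theorem, namely $\forall x\,\exists n\ \texttt{z}(n,x)$ and $\forall n_1,n_2\,((\exists x\ \texttt{z}(n_1,x)\wedge\texttt{z}(n_2,x))\Rightarrow n_1=n_2)$.

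I expect the main obstacle to be bookkeeping rather than anything conceptual: getting the boundary conditions exactly right so that the degenerate small cases and the two excluded families $n=F_i$ and $n=F_i-1$ are neither wrongly included nor wrongly excluded, and ensuring that the single existential witness $k$ is genuinely shared between hypothesis and conclusion in (c) and (d) rather than being silently requantified. Once the eight sentences are stated correctly, each is decided in seconds, and all return \texttt{TRUE}.
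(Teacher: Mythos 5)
Your proposal is correct and takes essentially the same approach as the paper: the paper's proof is precisely a list of {\tt Walnut} queries built from the verified synchronized automata for $z$ and $m$, the predicates {\tt phin}/{\tt phi2n}, and the regular expression {\tt isfib}, with parts (c)--(h) phrased exactly as you describe (including universally quantifying the shared witness $k$ in (c) and (d)). The only cosmetic difference is in parts (a) and (b), where the paper encodes consecutive Fibonacci numbers directly by a two-track regular expression {\tt adjfib} given by \texttt{[0,0]*[0,1][1,0][0,0]*} instead of your first-order ``least Fibonacci number exceeding $n$'' formula; both formulations are decided identically by {\tt Walnut}.
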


\begin{proof}
We use the following {\tt Walnut} code:
\begin{verbatim}
reg isfib msd_fib "0*10*":
reg adjfib msd_fib msd_fib "[0,0]*[0,1][1,0][0,0]*":
eval parta "?msd_fib Ax,y ($adjfib(x,y) & x>=2) => $zp(x,y)":
eval partb "?msd_fib Ax,y ($adjfib(x,y) & y>=5) => $zp(y-1,x-1)":
eval partc "?msd_fib Ak,n,x ($phi2n(k,n) & $phin(k,x) & (~$isfib(n)) 
   & (~$isfib(n+1))) => $zp(n,x)":
eval partd "?msd_fib Ak,n,x ($phin(k,n) & $phi2n(k,x) & (~$isfib(n)) &
   (~$isfib(n+1))) => $zp(n,x)":
eval parte "?msd_fib An,x,y,w (n>2 & $zp(n,x) & $mp(n,y) & $mp(n-1,w) & 
(Ai,r (i>=1 & i<n & $z(i,r)) => r!=w)) => (x=w & y=w)":
eval partf "?msd_fib An,x,y,w (n>2 & $zp(n,x) & $mp(n,y) & $mp(n-1,w) &
(Ei,r i>=1 & i<n & $zp(i,r) & r=w)) => (x=w+n+1 & y=w+1)":
eval partg "?msd_fib Ax En $zp(n,x)":
eval parth "?msd_fib An1,n2 (Ex $zp(n1,x) & $zp(n2,x)) => n1=n2":
\end{verbatim}
All of the {\tt Walnut} commands return {\tt TRUE}.
\end{proof}

\section{New results}

The advantage to our approach is that, once the automata are obtained,
it becomes almost trivial to test additional conjectures and prove
new results.  We give a few examples.

\begin{theorem}
Suppose $n \geq 1$.  Then
$$f(n) = \begin{cases}
	\lfloor n \varphi \rfloor, & \text{ if $\exists m \ n-1 = \lfloor m \varphi \rfloor $;} \\
	\lfloor n/\varphi \rfloor + 1, & \text{otherwise.}
	\end{cases}
$$
\end{theorem}

\begin{proof}
We use the following {\tt Walnut} code:
\begin{verbatim}
eval chk1 "?msd_fib An,x (n>=1 & $fp(n,x) & Em $phin(m,n-1)) 
   => $phin(n,x)":
eval chk2 "?msd_fib An,x (n>=1 & $fp(n,x) & ~Em $phin(m,n-1)) 
   => $noverphi(n,x-1)":
\end{verbatim}
\end{proof}

\begin{corollary}
$\lfloor n/\varphi \rfloor + 1 \leq 
f(n) \leq \lfloor n \varphi \rfloor$ for all $n \geq 1$.
\end{corollary}

We now prove a theorem linking the functions of the two papers.

\begin{theorem}
We have $z(n) \in \{ f(n), f(n)+n, f(n)+1, f(n)-n \}$ for all $n \geq 0$.
\end{theorem}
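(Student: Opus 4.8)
The plan is to verify this inclusion directly in \texttt{Walnut}, in exactly the same spirit as the preceding theorems. Both $f$ and $z$ are Fibonacci-synchronized, and we have already confirmed that the automata \texttt{f} and \texttt{z} compute these functions correctly. The assertion that $z(n)$ lies in the four-element set $\{f(n),\, f(n)+n,\, f(n)+1,\, f(n)-n\}$ is nothing more than a disjunction of four linear equalities among $n$, $f(n)$, and $z(n)$, each of which is first-order expressible over the Zeckendorf numeration system. Consequently the whole statement is decidable, and it suffices to phrase it as a single first-order formula and let \texttt{Walnut} evaluate it.

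First I would bind $x = f(n)$ and $w = z(n)$ through the synchronized automata, and then assert the four-way disjunction. The one point that repays a little care is the fourth case, $z(n) = f(n) - n$: since \texttt{Walnut} works over the natural numbers, I would encode this as $w+n = x$ rather than $w = x-n$. This formulation captures $w = f(n)-n$ exactly when $f(n) \geq n$, and is automatically false otherwise, so no spurious solutions are introduced. Because \texttt{An} already ranges over all $n \geq 0$, the hypothesis $n \geq 0$ needs no separate encoding. The intended code is:

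\begin{verbatim}
eval check_zf "?msd_fib An,x,w ($f(n,x) & $z(n,w)) =>
   (w=x | w=x+n | w=x+1 | w+n=x)":
\end{verbatim}

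and I expect \texttt{Walnut} to return \texttt{TRUE} in a fraction of a second.

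The main difficulty, to the extent there is one, lies not in the verification but in arriving at the correct four-element set in the first place: one must compute the difference $z(n)-f(n)$ for enough values of $n$ to recognize that it always falls in $\{-n,\,0,\,1,\,n\}$. Once this candidate set is identified, the automata-theoretic proof is immediate. As a refinement, I would also have \texttt{Walnut} confirm that each of the four values is genuinely attained for some $n$, so as to certify that no proper subset of these four expressions suffices.
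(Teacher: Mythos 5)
Your proposal is correct and matches the paper's proof essentially verbatim: the paper verifies the same implication with the single {\tt Walnut} command {\tt eval thm6 "?msd\_fib An,x,y (\$fp(n,x) \& \$zp(n,y)) => (y=x|y=x+n|y=x+1|y+n=x)"}, including the same trick of writing the subtraction case as $y+n=x$ to stay within the natural numbers. The only differences are cosmetic (variable and automaton names), so there is nothing to add.
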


\begin{proof}
We use the following {\tt Walnut} code:
\begin{verbatim}
eval thm6 "?msd_fib An,x,y ($fp(n,x) & $zp(n,y)) => (y=x|y=x+n|y=x+1|y+n=x)":
\end{verbatim}
and {\tt Walnut} returns {\tt TRUE}.
\end{proof}

\begin{remark}
One may reasonably ask for simple characterizations of the $n$ for which
each case occurs.   The answer is that $n$ for which
$z(n) = f(n)$, $z(n)=f(n)+n$, $z(n)=f(n)+1$, $z(n)=f(n)-n$
are accepted by automata of $7$, $5$, $6$, and $9$
states, respectively, and these are easy to compute with
{\tt Walnut}.

Furthermore, by examining these automata, we easily see that
$z(n) = f(n)$ if and only if $n= 0$ or $n= F_{2k+1}$ for $k \geq 0$.

Similarly,
$z(n)=f(n)+n$ if and only if $n=0$ or $n = \lfloor k \varphi^2 \rfloor + 1$
for some $k \geq 1$.  This can easily be proved with {\tt Walnut}.
\end{remark}

\section{Going further}

\subsection{Hofstadter's married functions}

Hofstadter \cite{Hofstadter:1979} defined two sequences he called
``married'' functions, via the system of recurrences
\begin{align*}
b(n) &= n - a(b(n-1)) \\
a(n) &= n-b (a(n-1)),
\end{align*}
for $n \geq 1$, with initial values $a(0) = 1$, $b(0) = 0$.
Here $(a(n))$ is \seqnum{A005378} and $(b(n))$ is \seqnum{A005379}.
The first few values are given in Table~\ref{tab10}.
\begin{table}[H]
\begin{center}
\begin{tabular}{c|ccccccccccccccccccccc}
$n$ & 0 & 1 & 2 & 3 & 4 & 5 & 6 & 7 & 8 & 9 & 10 & 11 & 12 & 13 & 14 & 15 & 16 & 17 & 18   \\
\hline
$a(n)$ & 1& 1& 2& 2& 3& 3& 4& 5& 5& 6& 6& 7& 8& 8& 9& 9&10&11&11 \\
$b(n)$ & 0& 0& 1& 2& 2& 3& 4& 4& 5& 6& 6& 7& 7& 8& 9& 9&10&11&11
\end{tabular}
\end{center}
\caption{Hofstadter's ``married'' functions.}
\label{tab10}
\end{table}

We can use our techniques to guess and prove
the automata for these functions.  They are depicted below
in Figure~\ref{fig5}.
\begin{figure}[H]
\begin{center}
\includegraphics[width=3.2in]{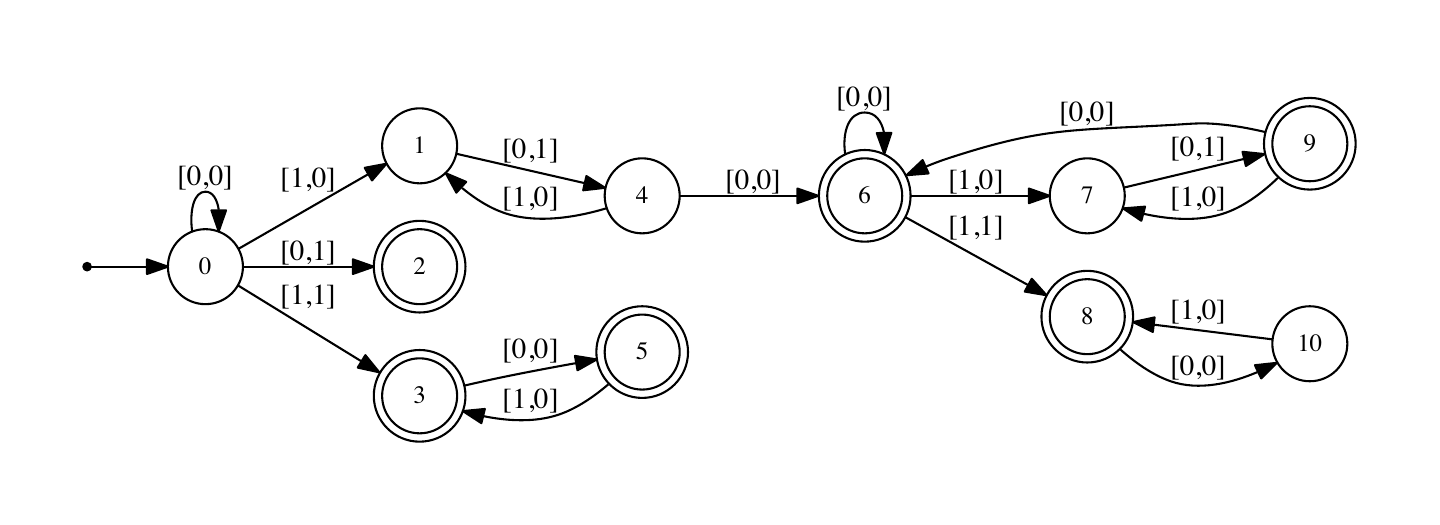} 
\includegraphics[width=3.2in]{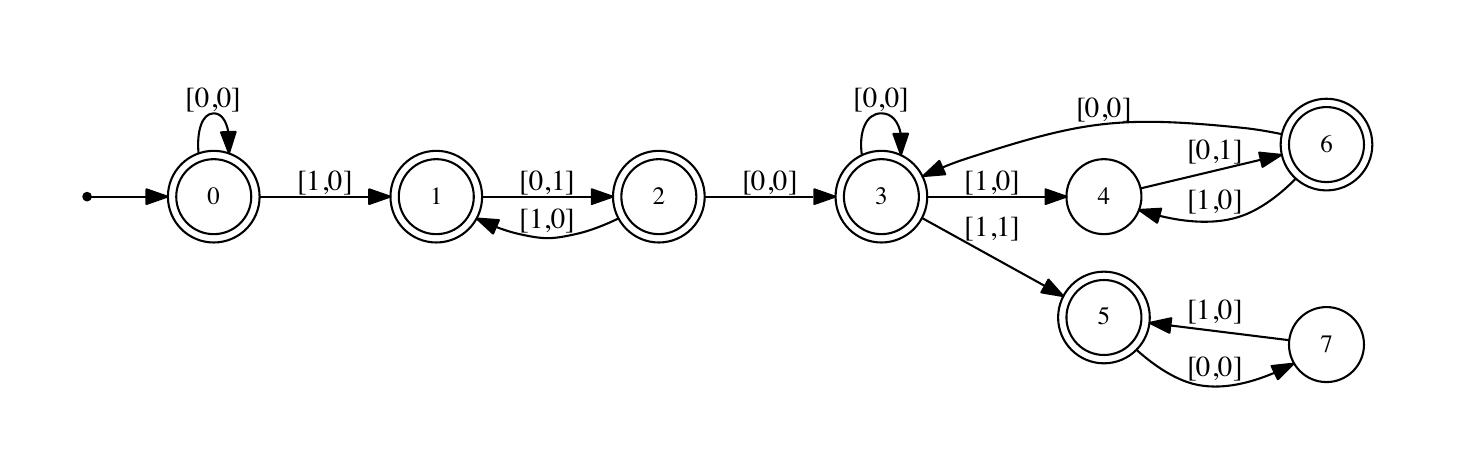}
\end{center}
\caption{Fibonacci automata {\tt ha} and {\tt hb} for the ``married'' functions $a(n)$ 
(left) and $b(n)$ (right).}
\label{fig5}
\end{figure}

With these automata, we can easily recover the closed forms for
these sequences previously obtained by Stoll \cite{Stoll:2008}:
\begin{theorem}
Let $\alpha = (\sqrt{5}-1)/2$.  Define
$$\varepsilon_1 (n) = \begin{cases}
	1, & \text{if $n= F_{2k} -1$ for some $k \geq 1$;} \\
	0, & \text{otherwise;}
	\end{cases}
$$
and 
$$\varepsilon_2 (n) = \begin{cases}
        1, & \text{if $n= F_{2k+1} -1$ for some $k \geq 1$;}\\
        0, & \text{otherwise;}
        \end{cases}
$$
Then for $n \geq 0$ we have
\begin{align*}
a(n) &= \lfloor (n+1) \alpha \rfloor + \epsilon_1 (n) \\
b(n) &= \lfloor (n+1) \alpha \rfloor - \epsilon_2 (n)
\end{align*}
\end{theorem}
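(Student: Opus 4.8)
The plan is to convert both claimed identities into Fibonacci-automatic first-order statements and discharge them with \texttt{Walnut} against the automata \texttt{ha} and \texttt{hb}, in exactly the style of the theorems in the previous sections. Since those automata are already guessed and verified, essentially all of the remaining work lies in building automata for the three ingredients $\lfloor (n+1)\alpha \rfloor$, $\varepsilon_1(n)$, and $\varepsilon_2(n)$.

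First I would note that $\alpha = (\sqrt5 - 1)/2 = \varphi - 1 = 1/\varphi$, so that $\lfloor (n+1)\alpha \rfloor = \lfloor (n+1)/\varphi \rfloor$. This quantity is already computed by the synchronized automaton \texttt{noverphi} defined earlier: the relation $\texttt{noverphi}(n+1,s)$ holds exactly when $s = \lfloor (n+1)\alpha \rfloor$. Next I would encode the exceptional sets. Because $n = F_{2k}-1$ is equivalent to $n+1 = F_{2k}$ (and similarly in the odd case), I need automata recognizing even- and odd-indexed Fibonacci numbers. The key observation is that in Zeckendorf representation $F_j$ is a single $1$ followed by $j-2$ zeros, so the parity of the index $j$ equals the parity of the number of trailing zeros. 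Allowing for leading-zero padding (just as in the paper's \texttt{isfib}), the even-indexed Fibonacci numbers $F_{2k}$ with $k \geq 1$ are exactly those matching \texttt{0*1(00)*}, and the odd-indexed ones $F_{2k+1}$ are those matching \texttt{0*10(00)*}:
\begin{verbatim}
reg evenfib msd_fib "0*1(00)*":
reg oddfib msd_fib "0*10(00)*":
\end{verbatim}
Then $\varepsilon_1(n) = 1 \Leftrightarrow \texttt{evenfib}(n+1)$ and $\varepsilon_2(n) = 1 \Leftrightarrow \texttt{oddfib}(n+1)$.

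Finally I would assert the two closed forms, namely $a(n) = s + \varepsilon_1(n)$ and $b(n) = s - \varepsilon_2(n)$ with $s = \lfloor (n+1)\alpha\rfloor$, and expect \texttt{Walnut} to return \texttt{TRUE} for each:
\begin{verbatim}
eval checka "?msd_fib An,x,s ($ha(n,x) & $noverphi(n+1,s)) =>
   (($evenfib(n+1) => x=s+1) & (~$evenfib(n+1) => x=s))":
eval checkb "?msd_fib An,x,s ($hb(n,x) & $noverphi(n+1,s)) =>
   (($oddfib(n+1) => x+1=s) & (~$oddfib(n+1) => x=s))":
\end{verbatim}

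I expect the main obstacle to be purely in getting the $\varepsilon_i$ automata exactly right, since the proof otherwise collapses to a single query per function: correctness rests entirely on the trailing-zero-parity encoding of the Fibonacci index and on the boundary conventions, namely the restriction $k \geq 1$, the treatment of $1 = F_2$ as even-indexed, and the leading-zero padding that \texttt{msd\_fib} may introduce. Reassuringly, any misclassification of even a single small value would force \texttt{Walnut} to return \texttt{FALSE}, so a successful verification simultaneously certifies that the encoding of $\varepsilon_1$ and $\varepsilon_2$ is the intended one.
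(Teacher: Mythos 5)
Your proposal is correct and takes essentially the same approach as the paper: the paper uses the identical regular expressions \texttt{evenfib} $=$ \texttt{0*1(00)*} and \texttt{oddfib} $=$ \texttt{0*10(00)*}, the same reduction of $\lfloor (n+1)\alpha\rfloor$ to \texttt{noverphi}$(n+1,\cdot)$, and a logically equivalent pair of \texttt{Walnut} queries against \texttt{ha} and \texttt{hb}. The only cosmetic difference is that the paper packages $\varepsilon_1,\varepsilon_2$ as synchronized automata via \texttt{def eps1}, \texttt{def eps2} and asserts $y=x+z$ (resp.\ $y+z=x$), whereas you inline the case split on $\texttt{evenfib}(n+1)$/$\texttt{oddfib}(n+1)$ directly in the final formula.
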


\begin{proof}
We use the following {\tt Walnut} code:
\begin{verbatim}
reg evenfib msd_fib "0*1(00)*":
reg oddfib msd_fib "0*10(00)*":

def eps1 "?msd_fib (x=1 & $evenfib(n+1)) | (x=0 & ~$evenfib(n+1))":
def eps2 "?msd_fib (x=1 & $oddfib(n+1)) | (x=0 & ~$oddfib(n+1))":

eval checkstolla "?msd_fib An,x,y,z ($noverphi(n+1,x) & $ha(n,y) &
   $eps1(n,z)) => y=x+z":
eval checkstollb "?msd_fib An,x,y,z ($noverphi(n+1,x) & $hb(n,y) &
   $eps2(n,z)) => y+z=x":
\end{verbatim}
And {\tt TRUE} is returned twice.
\end{proof}

\subsection{Quet's sequence}

In a personal communication, Muharem Avdispahi\'c suggested
looking at parameterized versions of the sequences
we have studied here.  Here is one possibility:  let $k \geq -1$. 
Define $A_k (n) = n$ for $0 \leq n \leq k$,
and for $n \geq k$ define $A_k(n+1)$ to be the least natural number
such that $A_k(n+1)  \not\in \{ A_k(0), A_k(1), \ldots, A_k(n)\}$ and
$\sum_{k+1 \leq i \leq n+1} A_k(n) \equiv \modd{0} {n+k}$.  
Values of the first few sequences are given in
Table~\ref{tab6}.

\begin{table}[H]
\begin{center}
\begin{tabular}{c|ccccccccccccccccccccc}
$n$ & 0 & 1 & 2 & 3 & 4 & 5 & 6 & 7 & 8 & 9 & 10 & 11 & 12 & 13 & 14 & 15 & 16 & 17 & 18   \\
\hline
$A_{-1} (n)$ & 0& 1& 2& 3& 6& 4& 9& 5&12&14& 7&17& 8&20&22&10&25&11&28\\
$A_0(n)$ &  0& 1& 3& 2& 6& 8& 4&11& 5&14&16& 7&19&21& 9&24&10&27&29\\
$A_1(n)$ &  0& 1& 3& 5& 2& 8&10& 4&13&15& 6&18& 7&21&23& 9&26&28&11\\
$A_2(n)$ &  0& 1& 2& 5& 7& 9& 3&12& 4&15&17& 6&20&22& 8&25&27&10&30\\
$A_3(n)$ &  0& 1& 2& 3& 7& 9&11&13& 4&16& 5&19& 6&22&24& 8&27&29&10\\
$A_4(n)$ &  0& 1& 2& 3& 4& 9&11&13&15&17& 5&20& 6&23& 7&26& 8&29&31
\end{tabular}
\end{center}
\caption{Parameterized sequences.}
\label{tab6}
\end{table}

Note that $A_0(n) = f(n)$ and $A_1(n) = z(n)$.

Of particular interest is $A_{-1} (n)$.  This
is Quet's sequence $(\seqnum{A125147}(n-1))_{n \geq 0}$.
As an example of the usefulness of the automaton-based method,
we can now prove a conjecture about Quet's sequence.

\begin{theorem}
Quet's sequence is a permutation of the integers.
\end{theorem}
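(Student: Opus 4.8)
The plan is to follow, step for step, the template of Section~\ref{sec2}, where the automata \texttt{fp} and \texttt{hp} were guessed and then certified. First I would run the guessing procedure of \cite{Shallit:2022} on the initial values of $A_{-1}$ from Table~\ref{tab6} to obtain a candidate Fibonacci-synchronized automaton \texttt{qp} computing a relation $q(n,x) \subseteq \Enn \times \Enn$, and then confirm with \texttt{Walnut} that this relation is a total, single-valued function:
\begin{verbatim}
eval func_q_check1 "?msd_fib An Ex $qp(n,x)":
eval func_q_check2 "?msd_fib An ~Ex,y x!=y & $qp(n,x) & $qp(n,y)":
\end{verbatim}

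The substantive step is to certify that \texttt{qp} actually computes $A_{-1}$, and this I would do by induction on $n$, exactly mirroring the argument for $f$. The greedy rule splits into two exclusions. Ruling out $A_{-1}(n) > q(n)$ reduces to checking that the values $q(n)$ are pairwise distinct, a single synchronized query analogous to \texttt{check\_fp\_membership}. Ruling out $A_{-1}(n) < q(n)$ requires first an upper bound; the table suggests $q(n) < 2n$ for $n \geq 2$, which \texttt{Walnut} can confirm as in \texttt{check\_fp\_inequality1}. Because the divisibility constraint at index $n$ has modulus $n-1$, the admissible values form an arithmetic progression of common difference $n-1$, so under the bound $q(n) < 2n$ the only smaller admissible candidates are $q(n)-(n-1)$ and possibly $q(n)-2(n-1)$; a single formula then checks that each is either nonpositive or already equals some earlier $q(i)$, the direct analogue of \texttt{check\_fp\_inequality2}. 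The divisibility itself is captured, just as in Section~\ref{sec2}, by an auxiliary running-average automaton: here one sets $g(n) = \frac{1}{n-1}\sum_{0 \le i \le n} A_{-1}(i)$, which the table shows is integer-valued, guesses its automaton, and verifies the recurrence $n\,g(n+1) - (n-1)\,g(n) = A_{-1}(n+1)$ by assembling the corresponding linear representation for the difference and minimizing it to the zero representation, precisely as Eq.~\eqref{hf2} was handled via the algorithms of \cite{Berstel&Reutenauer:2011}.

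Once \texttt{qp} has been certified, the theorem is immediate: to say that $A_{-1}$ is a permutation of $\Enn$ is exactly to say that $q$ is simultaneously onto and one-to-one, and these are the very queries already used for $f$ and $z$ in the earlier theorems:
\begin{verbatim}
eval quet_onto "?msd_fib Ax En $qp(n,x)":
eval quet_oneone "?msd_fib An1,n2 (Ex $qp(n1,x) & $qp(n2,x)) => n1=n2":
\end{verbatim}
Both should return \texttt{TRUE}, which proves the claim.

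The hard part will be the certification, and within it the divisibility bookkeeping. For $f$ and $z$ the modulus matched the index closely, so the running-average sequence was a clean synchronized function whose integrality came for free from being automatic; the shifted modulus $n-1$ for $A_{-1}$, together with the small-index edge cases where $n-1$ is zero or negative, means the correct auxiliary sequence and the correct linear identity must be pinned down carefully, with the finitely many small $n$ verified separately by a bounded search, before the rank computation can be trusted. After that, everything is mechanical: all the assertions involved are first-order statements about Fibonacci-automatic sequences, which form a decidable theory, so \texttt{Walnut} settles each one given the right automata and bound.
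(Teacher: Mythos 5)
Your proposal follows the paper's proof essentially step for step: guess a synchronized Fibonacci automaton for $A_{-1}$ and for its running-average companion $B_{-1}$ (your $g$), verify with {\tt Walnut} that both relations are total single-valued functions, certify the linking recurrence by building the linear representation of the difference and minimizing it to zero, carry out the greedy least-value certification exactly as for $f$ in Section~\ref{sec2}, and conclude with the onto and one-to-one queries. The only differences are presentational --- you spell out the membership and inequality checks (and the modulus-$(n-1)$ edge cases) that the paper invokes implicitly by saying it follows the same steps as Sections~\ref{sec2} and~\ref{sec3} --- so this is essentially the same proof.
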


\begin{proof}
We follow the same sequence of steps as
in Section~\ref{sec2} and \ref{sec3}:  first, define
\begin{equation}
B_{-1} (n+1) = {1 \over n} \left(\sum_{0 \leq i \leq n} A_{-1}(i) \right)
\label{quetb}
\end{equation}
for $n \geq 1$, and set $B_{-1}(i) = i$ for $i = 0,1$.
The first few values of $B_{-1}(n)$ are given in 
Table~\ref{tab7}.
\begin{table}[H]
\begin{center}
\begin{tabular}{c|ccccccccccccccccccccc}
$n$ & 0 & 1 & 2 & 3 & 4 & 5 & 6 & 7 & 8 & 9 & 10 & 11 & 12 & 13 & 14 & 15 & 16 & 17 & 18   \\
\hline
$B_{-1} (n)$ & 0& 1& 3& 3& 4& 4& 5& 5& 6& 7& 7& 8& 8& 9&10&10&11&11&12
\end{tabular}
\end{center}
\caption{Values of $B_{-1} (n)$.}
\label{tab7}
\end{table}

Next, from the first
few computed values of $A_{-1}(n)$ and $B_{-1} (n)$,
we ``guess'' an automaton
for $A_{-1}(n)$ and its associated sequence $B_{-1}(n)$ 
that we hope will verify Eq.~\eqref{quetb}.

These automata have 29 and 17 states, respectively, and are depicted
in Figures~\ref{fig6} and \ref{fig7}.
\begin{figure}[H]
\begin{center}
\includegraphics[width=6in]{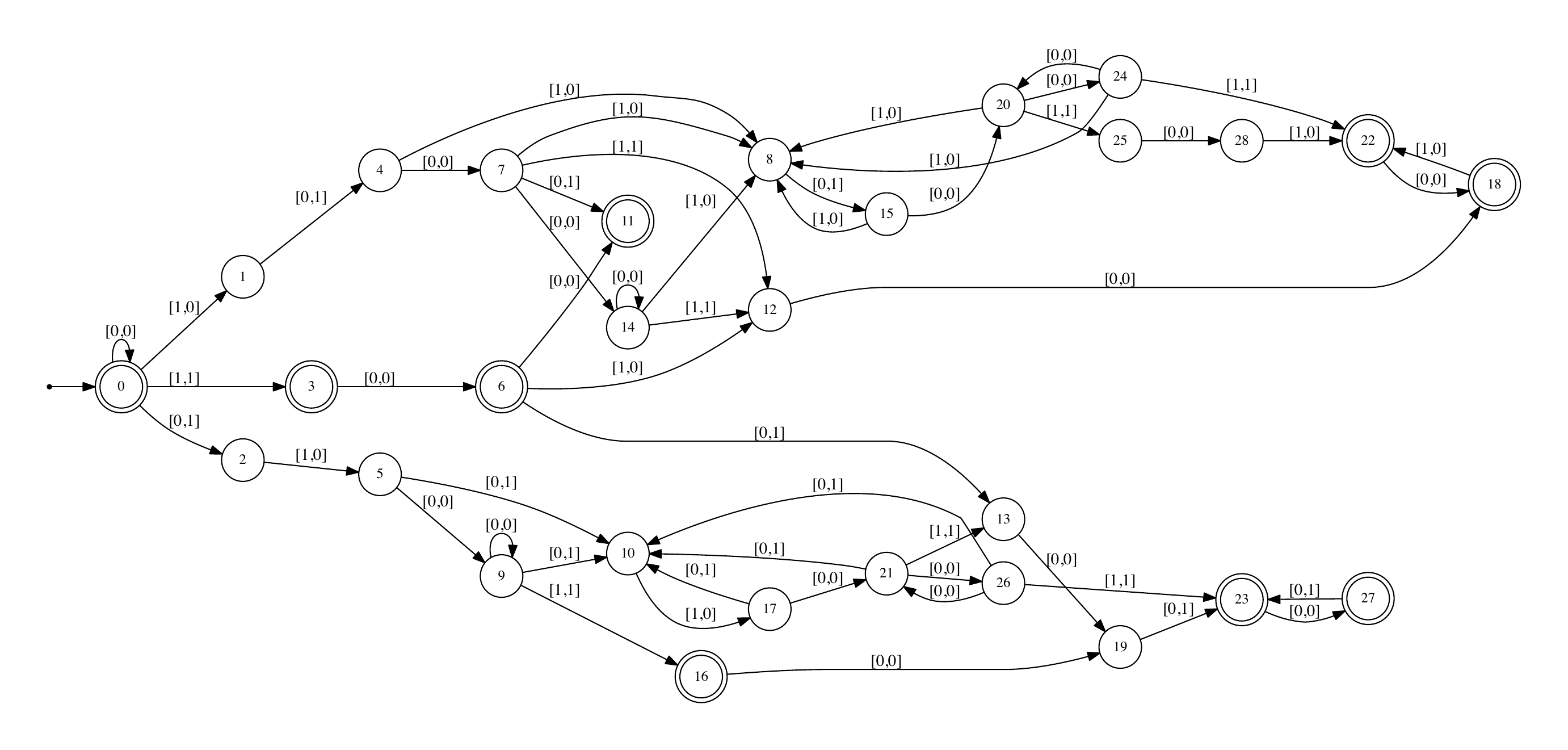} \
\end{center}
\caption{Fibonacci automaton {\tt queta} for $A_{-1}(n)$.}
\label{fig6}
\end{figure}

\begin{figure}[H]
\begin{center}
\includegraphics[width=6in]{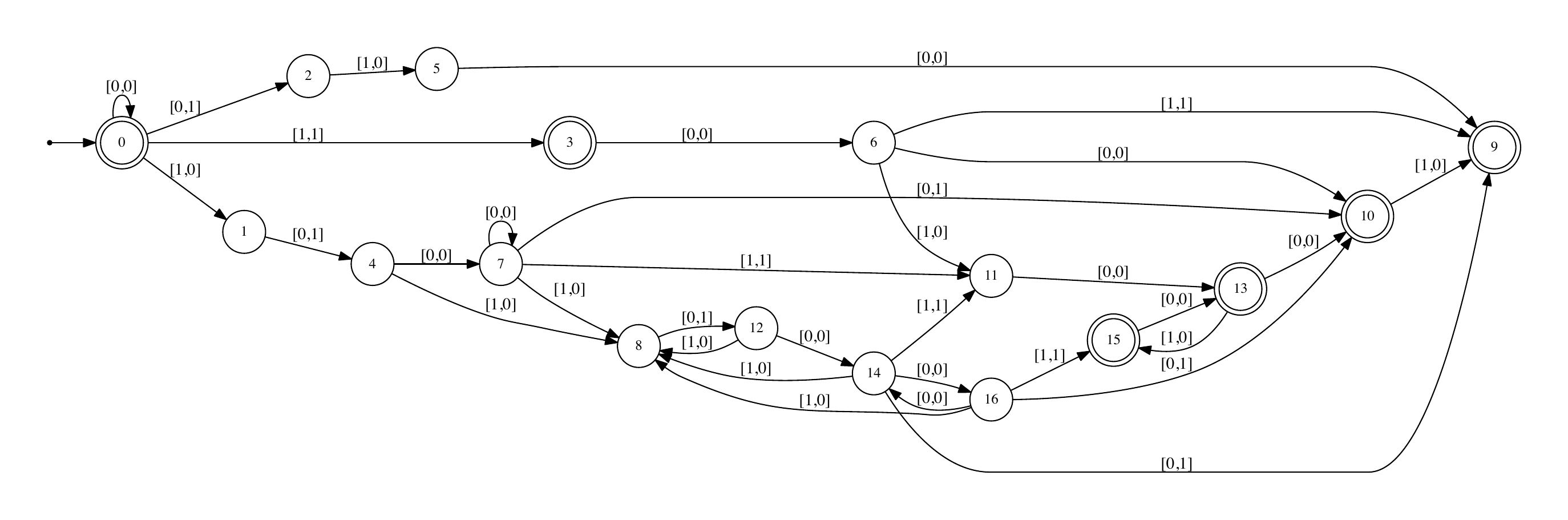} \
\end{center}
\caption{Fibonacci automaton {\tt quetb} for $B_{-1}(n)$.}
\label{fig7}
\end{figure}

Next, we verify that the guessed automata actually compute
functions, say $A'_{-1} (n)$ and $B'_{-1}(n)$.

We then check the correctness of the initial values and,
using the linear representations of the automata,
check that
$$ A'_{-1} (n+3) = (n+2) B'_{-1} (n+3) - (n+1) B'_{-1} (n+2)$$
for all $n \geq 0$.
This proves that the automata really do compute the
functions $A_{-1}(n)$ and $B_{-1} (n)$.

Finally, we use {\tt Walnut} to check the assertion
that $A_{-1} (n)$ is a permutation of $\Enn$.

The needed {\tt Walnut} code is below:
\begin{verbatim}
eval queta_check1 "?msd_fib An Ex $queta(n,x)":
eval queta_check2 "?msd_fib An ~Ex,y x!=y & $queta(n,x) & $queta(n,y)":
eval quetb_check1 "?msd_fib An Ex $quetb(n,x)":
eval quetb_check2 "?msd_fib An ~Ex,y x!=y & $quetb(n,x) & $quetb(n,y)":

eval en n "?msd_fib i<n":
eval enp1 n "?msd_fib i<n+1":
eval enp2 n "?msd_fib i<n+2":
#6,6,8 states

eval qan3 n "?msd_fib Ex $queta(n+3,x) & i<x":
eval qbn3 n "?msd_fib Ex $quetb(n+3,x) & i<x":
eval qbn2 n "?msd_fib Ex $quetb(n+2,x) & i<x":
#52,20,17 states

#check that queta is a permutation of the integers
eval queta_onto "?msd_fib Ax En $queta(n,x)":
eval queta_one_one "?msd_fib ~En1,n2,x (n1!=n2) & $queta(n1,x) & $queta(n2,x)":
\end{verbatim}
\end{proof}

\subsection{An unexpected connection}

Finally, we show that Quet's sequence is intimately connected
to the ``married'' functions of Hofstadter:
\begin{theorem}
For all $n \geq 2$ we have $B_{-1} (n) = a(n-2)+2$.
\end{theorem}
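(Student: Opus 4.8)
The plan is to proceed exactly as in the rest of the paper: since every function involved is Fibonacci-synchronized, the claimed identity is itself a first-order assertion about Zeckendorf representations, and hence can be decided directly by \texttt{Walnut}. Crucially, no new automata are needed. The automaton \texttt{quetb} computing $B_{-1}$ was already constructed and verified correct in the proof of the previous theorem, and the automaton \texttt{ha} computing Hofstadter's married function $a$ was constructed and verified in the proof of Stoll's closed forms. Both have already been shown to define total \emph{functions} on $\Enn$, so I am free to quantify over their output values.

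First I would translate the equation $B_{-1}(n) = a(n-2)+2$ into a single first-order formula in the Fibonacci numeration system. The synchronized form means I can write: for every $n \geq 2$ and all $x,y$, if $\texttt{quetb}(n,x)$ and $\texttt{ha}(n-2,y)$ hold, then $x = y+2$. The guard $n \geq 2$ plays two roles: it matches the stated range of the theorem, and it guarantees that the shifted argument $n-2$ is a genuine natural number, so that \texttt{Walnut}'s truncated subtraction on $\Enn$ coincides with ordinary subtraction. Concretely, I would run
\begin{verbatim}
eval lastthm "?msd_fib An,x,y (n>=2 & $quetb(n,x) & $ha(n-2,y))
   => x=y+2":
\end{verbatim}
and expect the output \texttt{TRUE}, which by the decidability of the underlying theory constitutes a complete proof.

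The only point requiring care—more a bookkeeping check than a genuine obstacle—is the alignment of indexing conventions between the two automata. One must confirm that \texttt{ha} is indexed so that $\texttt{ha}(n-2,y)$ really encodes $y = a(n-2)$ with the offset used in Table~\ref{tab10}, and that the additive constant $+2$ is the correct one. This is immediate from the initial values: $B_{-1}(2)=3=a(0)+2$, $B_{-1}(3)=3=a(1)+2$, and $B_{-1}(4)=4=a(2)+2$, read off from Tables~\ref{tab7} and~\ref{tab10}. Once these conventions are pinned down, there is nothing left to verify by hand; the entire burden of the case analysis is discharged by the decision procedure, which is precisely the advantage of the automaton-based method emphasized throughout the paper.
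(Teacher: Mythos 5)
Your proposal is correct and matches the paper's proof essentially verbatim: the paper's proof consists of exactly the {\tt Walnut} command you wrote, \verb|"?msd_fib An,x,y (n>=2 & $quetb(n,x) & $ha(n-2,y)) => x=y+2"|, relying on the previously verified automata {\tt quetb} and {\tt ha}, and it returns {\tt TRUE}. Your additional sanity checks on indexing and the role of the guard $n\geq 2$ are sensible but do not change the argument.
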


\begin{proof}
We use the following {\tt Walnut} command:
\begin{verbatim}
eval quetbcheck "?msd_fib An,x,y (n>=2 & $quetb(n,x) & $ha(n-2,y)) => x=y+2":
\end{verbatim}
which returns {\tt TRUE}.
\end{proof}

{\tt Walnut} is available at the following URL:\\
\centerline{\url{https://cs.uwaterloo.ca/~shallit/walnut.html} \ .}

\section{Acknowledgment}

I thank Muharem Avdispahi\'c for his kind suggestion, and Michel Dekking for
pleasant and helpful discussions.

\end{document}